\DeclareMathOperator{\mean}{\mathbf{E}}
\DeclareMathOperator{\Prob}{\mathbf{P}}
\newtheorem{definition}{Definition}
\newtheorem{theorem}{Theorem}
\newtheorem{corollary}{Corollary}
\newcommand{\cent}[0]{\mbox{\textcent}}
\begin{document}


\title{GAPs for Shallow Implementation of Quantum Finite Automata}

\author{Mansur Ziiatdinov
\institute{Kazan Federal University, Kazan 420008, Russia}
\email{gltronred@gmail.com}
\and
Aliya Khadieva
\institute{Faculty of Computing, University of Latvia, R\={\i}ga, Latvia}
\institute{Kazan Federal University, Kazan 420008, Russia}
\email{aliya.khadi@gmail.com}
\and
Abuzer Yakary{\i}lmaz
\institute{Faculty of Computing, University of Latvia, R\={\i}ga, Latvia}
\email{abuzer.yakaryilmaz@lu.lv}
}
\def\titlerunning{GAPs for Shallow Implementation of QFA}
\def\authorrunning{M. Ziiatdinov, A. Khadieva \& A. Yakary{\i}lmaz}




\maketitle

\begin{abstract}
Quantum fingerprinting is a technique that maps classical input word to a quantum state. The obtained quantum state is much shorter than the original word, and its processing uses less resources, making it useful in quantum algorithms, communication, and cryptography. One of the examples of quantum fingerprinting is quantum automata algorithms for \(MOD_{p}=\{a^{i\cdot p} \mid i \geq 0\}\) languages, where $p$ is a prime number.

  However, implementing such an automaton on the current quantum hardware is not efficient.
  Quantum fingerprinting maps a word \(x \in \{0,1\}^{n}\) of length \(n\) to a state \(\ket{\psi(x)}\) of \(O(\log n)\) qubits, and uses \(O(n)\) unitary operations. Computing quantum fingerprint using all available qubits of the current quantum computers is infeasible due to a large number of quantum operations.

  To make quantum fingerprinting practical, we should optimize the circuit for depth instead of width in contrast to the previous works. We propose explicit methods of quantum fingerprinting based on tools from additive combinatorics, such as generalized arithmetic progressions (GAPs), and prove that these methods provide circuit depth comparable to a probabilistic method. We also compare our method to prior work on explicit quantum fingerprinting methods.


\end{abstract}

\section{Introduction}

A quantum finite state automaton (QFA) is a generalization of classical finite automaton \cite{SY14,AY21}. Here we use the known simplest QFA model \cite{Moore2000}. Formally, a QFA is 5-tuple \( M = (Q,\allowbreak A \cup \{\cent,\mathdollar \},\allowbreak \ket{\psi_{0}},\allowbreak \mathcal{U},\allowbreak \mathcal{H}_{acc})\), where \(Q = \{q_{1}, \ldots, q_{D}\}\) is a finite set of states, \(A\) is the finite input alphabet,  $\cent,\mathdollar$ are the left and right end-markers, respectively. The state of $M$ is represented as a vector \(\ket{\psi} \in \mathcal{H}\), where $\mathcal{H}$ is the $D$-dimensional Hilbert space spanned by $\{ \ket{q_1},\ldots,\ket{q_D} \}$ (here $\ket{q_j}$ is a zero column vector except its $j$-th entry that is 1). The automaton $M$ starts in the initial state \(\ket{\psi_{0}} \in \mathcal{H}\), and makes transitions according to the operators \(\mathcal{U} = \{U_{a} \mid a \in A\}\) of unitary matrices. After reading the whole input word, the final state is observed with respect to the accepting subspace \(\mathcal{H}_{acc} \subseteq \mathcal{H}\).

Quantum fingerprinting provides a method of constructing automata for certain problems. It maps an input word \(w \in \{0,1\}^{n}\) to much shorter quantum state, its fingerprint \(\ket{\psi(w)} = U_{w}\ket{0^{m}}\), where $U_w$ is the single transition matrix representing the multiplication of all transition matrices while reading $w$  and $\ket{0^m} = \underbrace{\ket{0}\otimes \cdots \otimes \ket{0}}_{m~\text{times}} $. Quantum fingerprint captures essential properties of the input word that can be useful for computation.


One example of quantum fingerprinting applications is the QFA algorithms for \(MOD_{p}\) language~\cite{Ambainis2009}. For a given prime number \(p\), the language \(MOD_{p}\) is defined as  \(MOD_{p} = \{ a^{i} \mid i \text{ is divisible by } p\}\). Let us briefly describe the construction of the QFA algorithms for \(MOD_{p}\).

We start with a 2-state QFA $M_k$, where \(k \in \{1,\ldots,p-1\}\). The automaton \(M_{k}\) has two base states \(Q = \{q_{0}, q_{1}\}\), it starts in the state \(\ket{\psi_{0}} = \ket{q_{0}}\), and it has the accepting subspace spanned by \(\ket{q_{0}}\). At each step (for each letter) we perform the rotation
\[
  U_{a} =
  \begin{pmatrix}
    \cos \dfrac{2\pi k}{p} & \sin \dfrac{2\pi k}{p}\\ \\
    -\sin \dfrac{2\pi k}{p} & \cos \dfrac{2\pi k}{p}\\
  \end{pmatrix}.
\]
It is easy to see that this automaton gives the correct answer with probability 1 if \(w \in MOD_p\). However, if $w \notin MOD_p  $, the probability of correct answer can be close to $0$ rather than $1$ (i.e., bounded below by $ 1- \cos^2(\pi/p)  $). To boost the success probability we use \(d\) copies of this automaton, namely \(M_{k_{1}}\), \ldots, \(M_{k_{d}}\), as described below.

The QFA \(M\) for \(MOD_{p}\) has \(2d\) states: \(Q = \{q_{1,0}, q_{1,1}, \ldots, q_{d,0}, d_{d,1}\}\), and it starts in the state \(\ket{\psi_{0}} = \frac{1}{\sqrt{d}} \sum_{i=1}^{d} \ket{q_{i,0}}\). In each step, it applies the transformation defined as:
\begin{align}
  \ket{q_{i,0}} &\mapsto \cos \frac{2\pi k_{i}}{p} \ket{q_{i,0}} + \sin \frac{2\pi k_{i}}{p} \ket{q_{i,1}} \label{eq:u-transform:0}\\
  \ket{q_{i,1}} &\mapsto -\sin \frac{2\pi k_{i}}{p} \ket{q_{i,0}} + \cos \frac{2\pi k_{i}}{p} \ket{q_{i,1}}\label{eq:u-transform:1}
\end{align}
Indeed, $M$ enters into equal superposition of $d$ sub-QFAs, and each sub-QFA applies its rotation.
Thus, quantum fingerprinting technique associates the input word \(w = a^{j}\) with its fingerprint
\[
    \ket{\psi} = \frac{1}{\sqrt{d}} \sum_{i=1}^{d} \cos \frac{2\pi k_{i}j}{p} \ket{q_{i,0}} + \sin \frac{2\pi k_{i}j}{p} \ket{q_{i,1}}.
\]

Ambainis and Nahimovs~\cite{Ambainis2009} proved that this QFA accepts the language \(MOD_{p}\) with error probability that depends on the choice of the coefficients \(k_{i}\)'s. They also showed that for \(d = 2 \log(2p) / \varepsilon\) there is at least one choice of coefficients \(k_{i}\)'s such that error probability is less than \(\varepsilon\). The proof uses a probabilistic method, so these coefficients are not explicit. They also suggest two explicit sequences of coefficients: cyclic sequence \(k_{i} = g^{i} \pmod p\) for primitive root \(g\) modulo \(p\) and more complex AIKPS sequences based on the results of Ajtai et al.~\cite{Ajtai1990}.


While quantum fingerprinting is versatile and has different applications~\cite{Buhrman2001,Ablayev2016a}, it is not practical for the currently available real quantum computers. The main obstacle is that quantum fingerprinting uses an exponential (in the number \(m\) of qubits) circuit depth (e.g., see \cite{KZ22,bsocy2021,salehi2021cost} for some implementations of the aforementioned automaton \(M\)). Therefore, the required quantum volume\footnote{Quantum volume is an exponent of the maximal square circuit size that can be implemented on the quantum computer~\cite{Cross2019QuantumVolume,Wack2021QuantumPerformance}.} \(V_{Q}\) is roughly \(2^{|w| \cdot 2^{m}}\). For example, IBM reports~\cite{IBMEagle} that its Falcon r5 quantum computer has 27 qubits with a quantum volume of 128. It means that we can use only 7 of 27 qubits for the fingerprint technique.



In this paper, we investigate how to obtain better circuit depth by optimizing the coefficients used by $M$: $k_1,\ldots,k_d$.
We use generalized arithmetic progressions for generating a set of coefficients and show that such sets have a circuit depth comparable to the set obtained by the probabilistic method.

We summarize the previous and our results in Table~\ref{tbl:results}. Note that \(p\) is exponential in the number of qubits \(m\). The depth of the circuits is discussed in Section~\ref{sec:shallow}.
\begin{table}
  \caption{Comparison of different methods.}\label{tbl:results}
  \begin{tabular}{|l|l|l|l|p{4.5cm}|}
    \hline
    Method & Width & Depth & Source & Note\\
    \hline
    Cyclic 
           & \(p^{c/\log \log p}\) & \(p^{c/\log \log p}\) & \cite{Ambainis2009} & for some constant \(c > 0\)\\
    AIKPS 
           & \(\log^{2+3\epsilon} p\) & \((1 + 2\epsilon) \log^{1+\epsilon}p\; \log\log p\) & \cite{Razborov1993} & \\
    Probabilistic & \(4\log (2p) / \varepsilon\) & \(2\log (2p) / \varepsilon\) & \cite{Ambainis2009} &\\
    GAPs 
           & \(p / \varepsilon^2\) & \(\lceil \log p - 2 \log \varepsilon \rceil + 2\) & this paper &\\
    \hline
  \end{tabular}
\end{table}

The rest of the paper is organized as follows. In Section~\ref{sec:preliminaries} we give the necessary definitions and results on quantum computation and additive combinatorics to follow the rest of the paper. Section~\ref{sec:shallow} contains the construction of the shallow fingerprinting function and the proof of its correctness. Then, we present certain numerical simulations in  Section~\ref{sec:numeric}.
We conclude the paper with Section~\ref{sec:conclusions} by presenting some open questions and discussions for further research.

\section{Preliminaries}\label{sec:preliminaries}

Let us denote by \(\mathcal{H}^{2}\) two-dimensional Hilbert space, and by \((\mathcal{H}^{2})^{\otimes m}\) \(2^{m}\)-dimensional Hilbert space (i.e.,\ the space of \(m\) qubits). We use bra- and ket-notations for vectors in Hilbert space. For any natural number \(N\), we use \(\mathbb{Z}_{N}\) to denote the cyclic group of order \(N\).

Let us describe in detail how the automaton \(M\) works.
As we outlined in the introduction, the automaton \(M\) has \(2d\) states: \(Q = \{q_{1,0}, q_{1,1}, \ldots, q_{d,0}, d_{d,1}\}\), and it starts in the state \(\ket{\psi_{0}} = \frac{1}{\sqrt{d}} \sum_{i=1}^{d} \ket{q_{i,0}}\). After reading a symbol \(a\), it applies the transformation \(U_{a}\) defined by \eqref{eq:u-transform:0}, \eqref{eq:u-transform:1}:
\begin{align*}
  \ket{q_{i,0}} &\mapsto \cos \frac{2\pi k_{i}}{p} \ket{q_{i,0}} + \sin \frac{2\pi k_{i}}{p} \ket{q_{i,1}} \\
  \ket{q_{i,1}} &\mapsto -\sin \frac{2\pi k_{i}}{p} \ket{q_{i,0}} + \cos \frac{2\pi k_{i}}{p} \ket{q_{i,1}}
\end{align*}
After reading the right endmarker \(\$\), it applies the transformation \(U_{\$}\) defined in such way that \(U_{\$} \ket{\psi_{0}} = \ket{q_{1,0}}\). The automaton measures the final state and accepts the word if the result is \(q_{1,0}\).

So, the quantum state after reading the input word \(w = a^{j}\) is \[\ket{\psi} = \frac{1}{\sqrt{d}} \sum_{i=1}^{d} \cos \frac{2\pi k_{i}j}{p} \ket{q_{i,0}} + \sin \frac{2\pi k_{i}j}{p} \ket{q_{i,1}}.\] If \(j \equiv 0 \pmod p\), then \(\ket{\psi} = \ket{\psi_{0}}\), and \(U_{\$}\) transforms it into accepting state \(\ket{q_{1,0}}\), therefore, in this case, the automaton always accepts. If the input word \(w \notin MOD_{p}\), then the quantum state after reading the right endmarker \(\$\) is
\[
  \ket{\psi'} = \frac{1}{d} \Big( \sum_{i=1}^{d} \cos\frac{2\pi k_{i}j}{p} \Big) \ket{q_{1,0}} + \ldots,
\]
and the error probability is
\[
  P_{e} = \frac{1}{d^{2}} \Big( \sum_{i=1}^{d} \cos \frac{2\pi k_{i}x}{p} \Big)^{2}.
\]

In the rest of the paper, we denote by \(m\) the number of qubits in the quantum fingerprint, by \(d = 2^{m}\) the number of parameters in the set \(K\), by \(p\) the size of domain of the quantum fingerprinting function, and by \(U_{a}(K)\) the transformation defined above, which depends on the set \(K\).

Let us also define a function \(\varepsilon: \mathbb{Z}_{p}^{d} \to \mathbb{R}\) as follows:
\[
  \varepsilon(K) = \max_{x \in \mathbb{Z}_{p}} \bigg( \frac{1}{d^{2}} \Big| \sum_{j=1}^{d} \exp \frac{2\pi i k_{j}x}{p} \Big|^{2} \bigg).
\]
Note that \(P_{e} \le \varepsilon(K)\).

We also use some tools from additive combinatorics. We refer the reader to the textbook by Tao and Vu~\cite{TaoVu2006} for a deeper introduction to additive combinatorics.

An additive set \(A \subseteq Z\) is a finite non-empty subset of \(Z\), an abelian group with group operation \(+\). We refer \(Z\) as the ambient group.

If \(A,B\) are additive sets in \(Z\), we define the sum set \(A + B = \{ a + b \mid a \in A, b \in B\}\). We define additive energy \(E(A,B)\) between \(A,B\) to be
\[
E(A,B) = \bigg| \big\{ (a,b,a',b') \in A \times B \times A \times B \mid a + b = a'+b' \big\} \bigg| .
\]

Let us denote by \(e(\theta) = e^{2\pi i \theta}\), and by \(\xi \cdot x = \xi x / p\) bilinear form from \(\mathbb{Z}_p \times \mathbb{Z}_p\) into \(\mathbb{R} / \mathbb{Z}\). Fourier transform of \(f : \mathbb{Z}_{p} \to \mathbb{Z}_{p}\) is \(\hat{f}(\xi) = \mean_{x \in Z} f(x) \overline{e(\xi \cdot x)}\).

We also denote the characteristic function of the set \(A\) as \(1_{A}\), and we define \(\Prob_Z(A) = \widehat{1_{A}}(0) = |A| / |Z|\).

\begin{definition}[\cite{TaoVu2006}]
  Let \(Z\) be a finite additive group. If \(A \subseteq Z\), we define Fourier bias \(\|A\|_{\mathcal{U}}\) of the set \(A\) to be
  \[
    \| A \|_{\mathcal{U}} = \sup_{\xi \in Z \backslash \{0\}} | \widehat{1_A}(\xi) |
  \]
\end{definition}

There is a connection between the Fourier bias and the additive energy.
\begin{theorem}[\cite{TaoVu2006}]\label{thm:fourier-bias-bound}
  Let \(A\) be an additive set in a finite additive group \(Z\). Then
  \[
    \| A \|^4_{\mathcal{U}} \le \frac{1}{|Z|^3} E(A,A) - \Prob_Z(A)^4 \le \| A \|^2_{\mathcal{U}} \Prob_Z(A)
  \]
\end{theorem}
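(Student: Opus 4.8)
The plan is to transfer the whole statement to the Fourier side, where both displayed quantities become sums of powers of the coefficients $\widehat{1_A}(\xi)$, and then to sandwich one such sum between the other two by two elementary estimates. Write $N = |Z|$ and keep the expectation-normalized convention of the excerpt, so that $\widehat{f}(\xi) = \mean_{x} f(x)\overline{e(\xi\cdot x)}$, the characters $\{e(\xi\cdot x)\}_{\xi}$ are orthonormal under $\mean_x$, Parseval reads $\mean_x |f(x)|^2 = \sum_\xi |\widehat f(\xi)|^2$, and the (expectation-normalized) convolution satisfies $\widehat{f*g}(\xi)=\widehat f(\xi)\,\widehat g(\xi)$.

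The one content-bearing step is the Fourier identity for additive energy. I would introduce the representation count $r(s) = |\{(a,b)\in A\times A : a+b=s\}|$, so that $E(A,A)=\sum_s r(s)^2$ directly from the definition. Since $r(s) = N\,(1_A*1_A)(s)$, applying Parseval to $1_A*1_A$ together with $\widehat{1_A*1_A}(\xi)=\widehat{1_A}(\xi)^2$ gives
\[
  \frac{1}{N^3}E(A,A) = \mean_s |(1_A*1_A)(s)|^2 = \sum_\xi |\widehat{1_A}(\xi)|^4 .
\]
Isolating the $\xi=0$ term, which equals $\widehat{1_A}(0)^4 = \Prob_Z(A)^4$, rewrites the middle quantity of the statement as the clean tail sum
\[
  \frac{1}{N^3}E(A,A) - \Prob_Z(A)^4 = \sum_{\xi\neq 0} |\widehat{1_A}(\xi)|^4 .
\]

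Both inequalities then drop out of elementary bounds on this tail. For the left inequality I dominate the sum of nonnegative fourth powers below by its single largest term, which is exactly $\sup_{\xi\neq 0}|\widehat{1_A}(\xi)|^4 = \|A\|_{\mathcal U}^4$. For the right inequality I pull the supremum out of two of the four factors, obtaining $\sum_{\xi\neq0}|\widehat{1_A}(\xi)|^4 \le \|A\|_{\mathcal U}^2 \sum_{\xi\neq0}|\widehat{1_A}(\xi)|^2$, and evaluate the remaining $\ell^2$ mass by Parseval: $\sum_{\xi\neq0}|\widehat{1_A}(\xi)|^2 = \mean_x 1_A(x)^2 - \Prob_Z(A)^2 = \Prob_Z(A) - \Prob_Z(A)^2 \le \Prob_Z(A)$, using $\mean_x 1_A(x)^2 = |A|/N = \Prob_Z(A)$.

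Once the additive-energy identity is in place the rest is term extraction plus these two one-line bounds, so I expect the main obstacle to be purely the normalization bookkeeping: carefully tracking the powers of $N=|Z|$ that appear because the Fourier transform and convolution here are normalized by expectations rather than by sums, so that the factor $1/|Z|^3$ and the exponents on $\Prob_Z(A)$ in the statement come out exactly.
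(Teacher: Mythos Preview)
The paper does not supply its own proof of this theorem: it is quoted verbatim as a result from Tao and Vu~\cite{TaoVu2006} and then invoked as a black box inside the proof of the main theorem. Your argument is correct and is in fact the standard proof one finds in that reference---pass to the Fourier side via the identity $N^{-3}E(A,A)=\sum_\xi|\widehat{1_A}(\xi)|^4$, strip off the $\xi=0$ term, then bound the tail below by its largest summand and above by $\|A\|_{\mathcal U}^2$ times the Parseval mass---so there is nothing to compare beyond noting that you have reproduced the textbook derivation the paper is citing.
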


\begin{definition}[\cite{TaoVu2006}]
  Generalized arithmetic progression (GAP) of dimension \(d\) is a set
  \[
    A = \{ x_0 + n_1 x_1 + \ldots + n_d x_d \mid 0 \le n_1 \le N_1, \cdots, 0 \le n_d \le N_d \},
  \]
  where \(x_0, x_1, \ldots, x_d, N_1, \ldots, N_d \in Z\).
  The size of GAP is a product \(N_1 \cdots N_d\).
  If the size of set $A$, \(|A|\), equals to \(N_1 \cdots N_d\), we say that GAP is proper.
\end{definition}


%

\section{Shallow Fingerprinting}\label{sec:shallow}

Quantum fingerprint can be computed by the quantum circuit given in Figure~\ref{fig:deep-circuit}. The last qubit is rotated by a different angle \(2\pi k_{j} x / q\) in different subspaces enumerated by \(\ket{j}\). Therefore, the circuit depth is \(|K| = t = 2^{m}\). As the set \(K\) is random, it is unlikely that the depth can be less than \(|K|\).

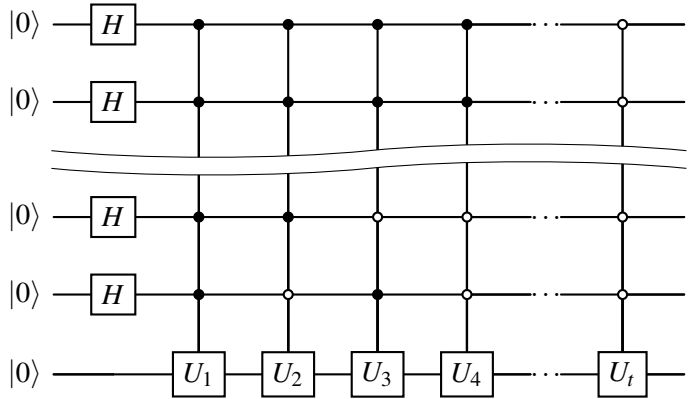
\begin{figure}
  \begin{quantikz}
    \lstick{$\ket{0}$} & \gate{H} & \ctrl{5}   & \ctrl{5}   & \ctrl{5}   & \ctrl{5}   & \qw\ldots & \octrl{5}  & \qw \\
    \lstick{$\ket{0}$} & \gate{H} & \ctrl{4}   & \ctrl{4}   & \ctrl{4}   & \ctrl{4}   & \qw\ldots & \octrl{4}  & \qw \\
    \wave&&&&&&&&& \\
    \lstick{$\ket{0}$} & \gate{H} & \ctrl{2}   & \ctrl{2}   & \octrl{2}  & \octrl{2}  & \qw\ldots & \octrl{2}  & \qw \\
    \lstick{$\ket{0}$} & \gate{H} & \ctrl{1}   & \octrl{1}  & \ctrl{1}   & \octrl{1}  & \qw\ldots & \octrl{1}  & \qw \\
    \lstick{$\ket{0}$} & \qw      & \gate{U_1} & \gate{U_2} & \gate{U_3} & \gate{U_4} & \qw\ldots & \gate{U_t} & \qw
  \end{quantikz}
  \caption{Deep fingerprinting circuit example. Gate \(U_{j}\) is a rotation \(R_{y}(4\pi k_{j}x / p)\). Controls in controlled gates run over all binary strings of length \(s\)} \label{fig:deep-circuit}
\end{figure}

Let us note that fingerprinting is similar to quantum Fourier transform. Quantum Fourier transform computes the following transformation:
\begin{equation}\label{eq:qft}
  \ket{x} \mapsto \frac{1}{N} \sum_{k=0}^{N-1} \omega_{N}^{xk}\ket{k},
\end{equation}
where \(\omega_{N} = e(1/N)\). Here is the quantum fingerprinting transform:
\[
  \ket{x} \mapsto \frac{1}{t} \sum_{j=1}^{t} \omega_{N}^{k_{j}x}\ket{k}.
\]

The depth of the circuit that computes quantum Fourier transform is \(O((\log N)^{2})\), and it heavily relies on the fact that in Eq.~\eqref{eq:qft} the sum runs over all \(k = 0, \ldots, N-1\). Therefore, to construct a shallow fingerprinting circuit we desire to find a set \(K\) with special structure.


Suppose that we construct a coefficient set \(K \subset \mathbb{Z}_{p}\) in the following way. We start with a set \(T = \{t_{1}, \ldots, t_{m}\}\) and construct the set of coefficients as a set of sums of all possible subsets:
\[
  K = \Big\{ \sum_{t \in S} t \mid S \subseteq T \Big\},
\]
where we sum modulo \(p\).


The quantum fingerprinting function with these coefficients can be computed by a circuit of depth \(O(m)\)~\cite{Kalis2018} (see Figure~\ref{fig:shallow-circuit}).

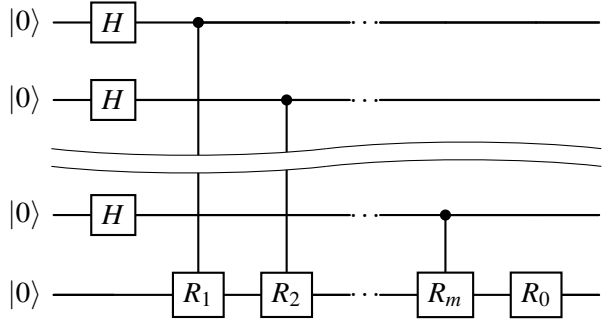
\begin{figure}
  \begin{quantikz}
  \lstick{$\ket{0}$} & \gate{H} & \ctrl{4} & \qw      & \qw\ldots & \qw      & \qw & \qw\\
  \lstick{$\ket{0}$} & \gate{H} & \qw      & \ctrl{3} & \qw\ldots & \qw      & \qw & \qw\\
  \wave&&&&&&& \\
  \lstick{$\ket{0}$} & \gate{H} & \qw      & \qw      & \qw\ldots & \ctrl{1} & \qw & \qw\\
  \lstick{$\ket{0}$} &  \qw     &\gate{R_1}&\gate{R_2}& \qw\ldots &\gate{R_m}& \gate{R_0} & \qw
  \end{quantikz}
  \caption{Shallow fingerprinting circuit example. Gate \(R_{j}\) is a rotation \(R_{y}(4\pi t_{j}x / p)\)} \label{fig:shallow-circuit}
\end{figure}

Finally, let us prove why the construction of the set \(K \subset \mathbb{Z}_{p}\) works.
\begin{theorem}
  Let \(\varepsilon > 0\), let \(m = \lceil \log p - 2 \log \varepsilon \rceil\) and \(d = 2^{m}\).

  Suppose that the number \(t_{0} \in \mathbb{Z}_{p}\) and the set \(T = \{t_{1}, \ldots, t_{m}\} \subset \mathbb{Z}_{p}\) are such that
  \[
    B = \{ 2 t_{0} + n_1 t_1 + \cdots + n_m t_m \mid 0 \le n_1 < 3, \ldots, 0 \le n_m < 3\}
  \]
  is a proper GAP.

  Then the set \(A\) defined as
  \[
    A = \left\{ t_{0} + \sum_{t \in S} t \mid S \subseteq T \right\}
  \]
  has \(\varepsilon(A) \le \varepsilon\).
\end{theorem}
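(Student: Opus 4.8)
The plan is to reduce the bound on $\varepsilon(A)$ to an estimate of the Fourier bias $\|A\|_{\mathcal{U}}$ and then to compute the additive energy $E(A,A)$ exactly, using properness of $B$ as the one essential hypothesis. First I would record the structural observations. The set $A = \{t_0 + \sum_{t\in S} t \mid S\subseteq T\}$ is exactly the GAP $\{t_0 + \epsilon_1 t_1 + \cdots + \epsilon_m t_m \mid \epsilon_i \in \{0,1\}\}$, whose sumset is
\[
  A + A = \{2t_0 + n_1 t_1 + \cdots + n_m t_m \mid n_i\in\{0,1,2\}\} = B.
\]
Since $B$ is a proper GAP, the coefficient map $(n_1,\dots,n_m)\mapsto 2t_0+\sum_i n_i t_i$ is injective on $\{0,1,2\}^m$; in particular the restriction to $\{0,1\}^m$ is injective, so $A$ has exactly $d=2^m$ distinct elements.

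Next I would connect $\varepsilon$ with the Fourier bias. Because $\widehat{1_A}(\xi) = \frac1p\sum_{a\in A} \overline{e(\xi\cdot a)}$, the quantity inside $\varepsilon(A)$ satisfies $\frac{1}{d^2}\big|\sum_{a\in A} e(a x/p)\big|^2 = \frac{p^2}{d^2}\,|\widehat{1_A}(x)|^2$. The error term is governed by the nonzero frequencies (the case $x\equiv 0$ corresponds to accepted words and contributes no error), which is precisely the range of the supremum defining $\|A\|_{\mathcal{U}}$, so
\[
  \varepsilon(A) = \frac{p^2}{d^2}\,\|A\|_{\mathcal{U}}^2 .
\]
One also notes that this quantity is invariant under translating $A$, so the shift $t_0$ is irrelevant here and only serves to center $B$.

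The core step is the evaluation of $E(A,A)$. Writing $r(s) = |\{(a,b)\in A\times A : a+b = s\}|$, properness of $B$ forces, for $s = 2t_0+\sum_i n_i t_i$, that a pair $a = t_0+\sum\epsilon_i t_i$, $b = t_0+\sum\delta_i t_i$ contributes iff $\epsilon_i+\delta_i = n_i$ for every $i$. The number of such $(\epsilon_i,\delta_i)\in\{0,1\}^2$ is $1,2,1$ according as $n_i = 0,1,2$, so $r(s)=\prod_i c(n_i)$ with $c=(1,2,1)$, giving
\[
  E(A,A) = \sum_s r(s)^2 = \prod_{i=1}^m\big(1^2+2^2+1^2\big) = 6^m.
\]
I would then invoke Theorem~\ref{thm:fourier-bias-bound}, which yields $\|A\|_{\mathcal{U}}^4 \le \frac{1}{p^3}E(A,A) - \Prob_{\mathbb{Z}_p}(A)^4 \le 6^m/p^3$.

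Finally I would assemble the pieces. From $\varepsilon(A) = \frac{p^2}{d^2}\|A\|_{\mathcal{U}}^2$, $\|A\|_{\mathcal{U}}^2 \le 6^{m/2}/p^{3/2}$, and $d=2^m$,
\[
  \varepsilon(A)^2 \le \frac{p^4}{d^4}\cdot\frac{6^m}{p^3} = p\Big(\frac{6}{16}\Big)^m = p\Big(\frac38\Big)^m .
\]
It then remains to check that $m = \lceil \log p - 2\log\varepsilon\rceil$ makes the right-hand side at most $\varepsilon^2$; since $2^m \ge p/\varepsilon^2$ and $8/3 > 2$, we get $(8/3)^m \ge p/\varepsilon^2$, i.e.\ $p(3/8)^m \le \varepsilon^2$, with room to spare. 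The main obstacle — and the only place the hypothesis is really used — is the exact additive-energy computation: without properness of $B$ the representation counts $r(s)$ could be larger, inflating $E(A,A)$ and weakening the bound, so the whole argument hinges on properness delivering the clean product formula $E(A,A)=6^m$.
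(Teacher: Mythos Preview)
Your proposal is correct and follows essentially the same route as the paper: compute the representation counts $r(s)$ using properness of $B$, sum their squares to get the additive energy, feed that into the Fourier-bias inequality (Theorem~\ref{thm:fourier-bias-bound}), and convert to a bound on $\varepsilon(A)=\big(\tfrac{p}{d}\|A\|_{\mathcal U}\big)^2$. The only substantive difference is that you evaluate the energy exactly via the product formula $E(A,A)=\prod_i(1^2+2^2+1^2)=6^m$, whereas the paper writes the same sum as $\sum_j\binom{m}{j}2^{m+j}$ and then uses the cruder bound $2^{3m}$; your sharper constant yields $\varepsilon(A)^2\le p(3/8)^m$ instead of $p/2^m$, which still closes under $m=\lceil\log p-2\log\varepsilon\rceil$ with room to spare. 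You also make explicit two points the paper leaves implicit: that properness of $B$ forces $|A|=d$, and that the supremum in $\varepsilon(A)$ is over nonzero $x$ (so it matches $\|A\|_{\mathcal U}$).
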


Let us outline the proof of this theorem. Firstly, we estimate the number of solutions to \(a + b = n\). Secondly, we use it to bound the additive energy \(E(A,A)\) of the set \(A\). Thirdly, we bound the Fourier bias \(\|A\|_{{\mathcal{U}}}\). Finally, we get a bound on \(\varepsilon(A)\) in terms of \(p\) and \(m\).

\begin{proof}
  Let us denote a set \(R_{n}(A)\) of solutions to \(a + b = n\), where \(a,b \in A\) and \(n \in \mathbb{Z}_{p}\):
  \[
    R_{n}(A) = \{ (a,b) \mid a+b = n; \; a,b \in A\}.
  \]
  Note that we have \(E(A,A) = \sum_{n\in Z} {R_n(A)}^2\).

  Suppose that \(n\) is represented as \(n = 2t_{0} + \sum_{i=1}^m \gamma_i t_i\), \(\gamma_i \in \{0,1,2\}\). If such representation exists, it is unique, because \(B\) is a proper GAP.
  Let us denote \(c_0 := \{i \mid \gamma_i = 0\}\), \(c_1 := \{i \mid \gamma_i = 1\}\), \(c_2 := \{i \mid \gamma_i = 2\}\).
  It is clear that \(c_0 \uplus c_1 \uplus c_2 = [m]\).

  Now suppose that \(n = a + b\) for some \(a,b \in A\). But \(a = t_{0} + \sum_i \alpha_i t_i\) and \(b = t_{0} + \sum_i \beta_i t_i\), \(\alpha_i, \beta_i \in \{0,1\}\).
  We get that if \(i \in c_{0}\) or \(i \in c_{2}\) then the corresponding coefficients \(\alpha_{i}\) and \(\beta_{i}\) are uniquely determined.
  Consider \(i \in c_{1}\). Then we have two choices: either \(\alpha_{i} = 1; \beta_{i} = 0\), or \(\alpha_{i} = 0; \beta_{i} = 1\).
  Therefore, we have \(R_n(A) = 2^{|c_1(n,A)|}\).

  We have that
  \[
    E(A,A) = \sum_{n \in Z} {R_n(A)}^2 = \sum_{n \in Z} 2^{2|c_1(n,A)|}.
  \]
  Using the fact that \(|c_0(n,A)| + |c_1(n,A)| + |c_2(n,A)| = m\), we see that
  \[
    E(A,A) = \sum_{n \in Z} 2^{2|c_1(n,A)|} = \sum_{j=0}^m \binom{m}{j} 2^{m-j} 2^{2j} = \sum_{j=0}^m \binom{m}{j} 2^{m+j} \le 2^{3m}
  \]

  We can bound the Fourier bias by Theorem~\ref{thm:fourier-bias-bound}:
  \[
    \| A \|^4_{\mathcal{U}} \le \frac{1}{|Z|^3} E(A,A) - \Prob_Z(A)^4 \le \| A \|^2_{\mathcal{U}} \Prob_Z(A)
  \]
  \[
    \| A \|_{\mathcal{U}}^4 \le \frac{2^{3m}}{2^{3 \cdot 2^m}} - \frac{2^{4m}}{2^{4 \cdot 2^m}} = \frac{d^3}{2^{3d}} - \frac{d^4}{2^{4d}}
  \]
  \[
    \| A \|_{\mathcal{U}} \le \frac{d^{3/4}}{p^{3/4}}
  \]

  Finally, we have
  \[
    \varepsilon(A) = \Big( \frac{p}{d} \|A\|_{\mathcal{U}} \Big)^2 \le \frac{p^{1/2}}{d^{1/2}}.
  \]

  By substituting the definitions of \(d\) and \(m\), we prove the theorem.
\end{proof}

%

\begin{corollary}
  The depth of the circuit that computes \(U_{a}(A)\) is \(\lceil \log p - 2 \log \epsilon \rceil\).
\end{corollary}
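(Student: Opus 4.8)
The plan is to read the depth directly off the structure of the shallow circuit in Figure~\ref{fig:shallow-circuit}, but only after verifying that this circuit actually implements \(U_{a}(A)\) for the subset-sum set \(A\) of the preceding theorem. The algebraic fact that makes everything work is that every element of \(A\) shares the common decomposition \(k_{S} = t_{0} + \sum_{i\in S} t_{i}\) for some \(S\subseteq[m]\), so the rotation angle \(4\pi k_{S} x/p\) attached to the branch \(\ket{S}\) splits additively as \(4\pi t_{0} x/p + \sum_{i\in S} 4\pi t_{i} x/p\).

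First I would describe the action of the circuit. The layer of \(m\) Hadamards sends the control register to the uniform superposition \(\frac{1}{\sqrt{2^{m}}}\sum_{S\subseteq[m]}\ket{S}\), where control qubit \(i\) is set to \(\ket{1}\) exactly when \(i\in S\). On the single target qubit the circuit then applies, for each \(i\), the rotation \(R_{i} = R_{y}(4\pi t_{i} x/p)\) conditioned on control qubit \(i\), followed by the unconditional rotation \(R_{0} = R_{y}(4\pi t_{0} x/p)\). Since \(R_{y}\) rotations about a common axis compose by adding angles, in the branch \(\ket{S}\) exactly the gates \(\{R_{i} : i\in S\}\) together with \(R_{0}\) fire, and their product is \(R_{y}(4\pi k_{S} x/p)\). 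This is precisely the branch-dependent rotation defining the fingerprint state for the coefficient set \(A\), so the circuit computes \(U_{a}(A)\).

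Next I would count the depth. The \(m\) Hadamards act on distinct qubits and form a single parallel layer. The controlled rotations \(R_{1},\ldots,R_{m}\) all act on the \emph{same} target qubit, so although their controls are distinct they cannot be parallelized and must run in sequence, contributing depth \(m\); the final unconditional \(R_{0}\) adds one further layer. Hence the total depth is \(m\) up to the additive constant coming from the Hadamard layer and \(R_{0}\). Substituting \(m = \lceil \log p - 2\log\varepsilon\rceil\) from the theorem then yields the claimed bound.

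The step that carries the real content is the correctness verification in the second paragraph: that composing single-qubit-controlled \(R_{y}\) rotations on a shared target reproduces the subset-sum angles exactly. This is exactly what lets us replace the multi-qubit-controlled gates of the deep circuit in Figure~\ref{fig:deep-circuit}, whose depth is \(2^{m}\), by \(m\) singly-controlled gates. The depth count itself is then immediate; its only subtlety is the observation that sharing the target qubit forces the \(m\) rotations to be sequential rather than parallel, so the depth scales linearly in \(m\) and thus logarithmically in \(p\).
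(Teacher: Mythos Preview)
Your argument is correct and follows the same route the paper takes: the corollary is stated without proof because it is meant to be immediate from the shallow-circuit construction in Figure~\ref{fig:shallow-circuit} (whose depth is $m$ up to an additive constant, as the paper already notes citing~\cite{Kalis2018}) together with the choice $m=\lceil\log p-2\log\varepsilon\rceil$ in the preceding theorem. Your added correctness check---that the additive subset-sum structure of $A$ makes the branch-rotation angle decompose into the $m$ singly-controlled pieces plus the unconditional $R_{0}$---is exactly the content the paper leaves implicit, and your depth count matches (the small discrepancy between $m$ in the corollary and $m+2$ in Table~\ref{tbl:results} is the paper's own looseness about the Hadamard layer and $R_{0}$, which you correctly flag as an additive constant).
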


\begin{theorem}[Circuit depth for AIKPS sequences]
  For given \(\varepsilon > 0\), let
  \begin{align*}
    R &= \{ r \mid r \text{ is prime}, (\log p)^{1+\varepsilon}/2 < r < (\log p)^{1+\varepsilon}\},\\
    S &= \{ 1, 2, \ldots, (\log p)^{1+2\varepsilon}\}, \\
    T &= \{ s \cdot r^{-1} \mid r\in R, s \in S\},
  \end{align*}
  where \(r^{-1}\) is the inverse of \(r\) modulo \(p\).

  Then the depth of the circuit that computes \(U_{a}(T)\) is less than \((1 + 2\epsilon) \log^{1+\epsilon}p\; \log\log p\).
\end{theorem}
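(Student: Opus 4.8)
The plan is to exploit the product structure of the AIKPS set $T = \{\, s \cdot r^{-1} \mid r \in R,\ s \in S \,\}$. For each fixed prime $r \in R$, the slice $\{\, s \cdot r^{-1} \mid s \in S \,\}$ is a scaled arithmetic progression, hence a one-dimensional GAP, and therefore admits a shallow subcircuit of the form in Figure~\ref{fig:shallow-circuit}; the full circuit is obtained by running these $|R|$ subcircuits one after another on a common target qubit. This reuses the mechanism of the previous theorem — where the depth of the shallow circuit equals the number of generators of a subset-sum set — but now applied slice by slice.

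First I would analyse a single slice. Fix $r \in R$ and write each $s \in S = \{1, \ldots, L\}$ with $L = (\log p)^{1+2\varepsilon}$ in binary as $s = \sum_{j=0}^{\lceil \log L\rceil - 1} b_j 2^j$. The phase applied to the target qubit for the coefficient $s r^{-1}$ then factors as $e\big((s r^{-1})\cdot x\big) = \prod_{j} e\big((b_j 2^j r^{-1})\cdot x\big)$, so the slice is exactly the subset-sum set generated by $\{\, 2^j r^{-1} \bmod p \,\}_{j}$. Putting the bits $b_j$ on $\lceil \log L\rceil$ control qubits in uniform superposition and applying one controlled rotation $R_y\!\big(4\pi\, 2^{j} r^{-1} x / p\big)$ per bit reproduces the pattern of Figure~\ref{fig:shallow-circuit}, giving a subcircuit of depth $\lceil \log L\rceil = \lceil (1+2\varepsilon)\log\log p\rceil$.

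Next I would combine the slices. I would index the prime $r$ by an auxiliary register of $\lceil \log|R|\rceil$ qubits in uniform superposition and gate each slice's rotations on that register carrying the value $r$. Since every rotation of every slice acts on the single shared target qubit, the slices cannot be parallelised and their depths add, for a total of $|R|\cdot\lceil \log L\rceil$. For $|R|$ the trivial counting bound is enough: every element of $R$ is an integer in an interval of length $(\log p)^{1+\varepsilon}/2$ below $(\log p)^{1+\varepsilon}$, so $|R| < (\log p)^{1+\varepsilon}$ and no prime-counting estimate is required. Multiplying the two bounds yields depth $<(1+2\varepsilon)\,(\log p)^{1+\varepsilon}\,\log\log p$, as claimed.

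The main obstacle is the combination step, not the per-slice analysis: one must justify that sharing a common target qubit forces the $|R|$ slices to execute sequentially — so that the depth is the product $|R|\cdot\log L$ rather than a sum — and at the same time that $R$, being a set of primes with no additive regularity, offers no structure that would let one beat the factor $|R|$. A secondary, routine point is that $L$ need not be a power of two, so the binary expansion and its control register use $\lceil \log L\rceil$ qubits and the uniform superposition is prepared over a set containing $S$; the factor $1/2$ in the bound on $|R|$ leaves enough slack to absorb the resulting ceiling into the stated estimate.
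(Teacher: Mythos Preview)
Your proposal is correct and follows essentially the same route as the paper: decompose $T$ into the slices $S\cdot\{r^{-1}\}$, realise each slice by a shallow binary-expansion circuit of depth about $(1+2\varepsilon)\log\log p$ (the paper's gates $R_{j,k}$ rotate by $2^{k-1}\cdot 4\pi r_j^{-1}/p$, exactly your subset-sum generators), concatenate the $|R|$ slices on a common target qubit, and bound $|R|<(\log p)^{1+\varepsilon}$ by interval length alone. The only cosmetic differences are that the paper gives each slice its own block of control wires rather than a shared $r$-index register, and it adds one uncontrolled rotation $R_j$ per slice to shift the range from $\{0,\ldots\}$ to $\{1,\ldots\}$, accounting for the ``$+1$'' in its per-slice depth $\lceil(1+2\varepsilon)\log\log p\rceil+1$.
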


\begin{proof}
  Let us denote the elements of \(R\) by \(r_{1}, r_{2}, \ldots\). Let \(S\cdot \{r^{-1}\}\) be a set \(\{s \cdot r^{-1} \mid s \in S\}\).

  Consider the following circuit \(\mathcal{C}_{j}\) (see Figure~\ref{fig:aikps-circuit}) with \(w=\lceil (1+2\varepsilon)\log\log p \rceil + 1\) wires.

  \begin{figure}[h]
    \begin{quantikz}
      \lstick{$\ket{0}$} & \gate{H} & \ctrl{4} & \qw      & \qw\ldots & \qw      & \qw & \qw\\
      \lstick{$\ket{0}$} & \gate{H} & \qw      & \ctrl{3} & \qw\ldots & \qw      & \qw & \qw\\
      \wave&&&&&&& \\
      \lstick{$\ket{0}$} & \gate{H} & \qw      & \qw      & \qw\ldots & \ctrl{1} & \qw & \qw\\
      \lstick{$\ket{0}$} &  \qw     &\gate{R_{j,1}}&\gate{R_{j,2}}& \qw\ldots &\gate{R_{j,w-1}}& \gate{R_{j}} & \qw
    \end{quantikz}
    \caption{Circuit \(\mathcal{C}_j\) for AIKPS subsequence. Gate \(R_{j}\) is a rotation \(R_{y}(4\pi (r_{j}^{-1}) / p)\). Gate \(R_{j,k}\) is a rotation \(R_{y}(2^{k-1} \cdot 4\pi (r_{j}^{-1}) / p)\)} \label{fig:aikps-circuit}
  \end{figure}
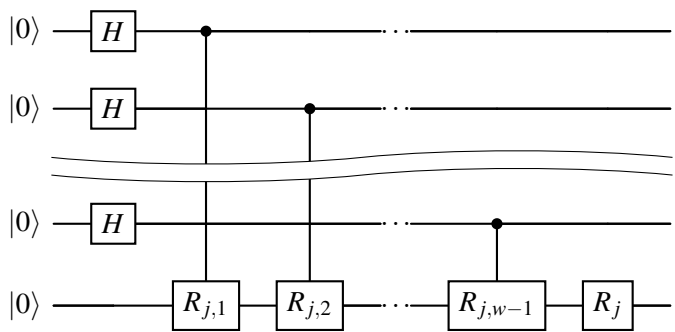

  The circuit \(\mathcal{C}_{j}\) has depth \(\lceil (1+2\varepsilon)\log\log p \rceil + 1\) and computes the transformation \(U_{a}(S \cdot \{r_{j}^{-1}\})\). By repeating the same circuit for all \(r_{j} \in R\) we get the required circuit for \(U_{a}(T)\) (see Figure~\ref{fig:full-aikps-circuit}).

  \begin{figure}[h]
    \begin{adjustbox}{width=\textwidth}
    \begin{quantikz}
\lstick{$\ket{0}$}&\gate{H}& \ctrl{9} & \qw      & \qw\ldots & \qw      &\qw       & \qw      & \qw      & \qw\ldots & \qw      &\qw        & \qw\ldots & \qw            & \qw\\
\lstick{$\ket{0}$}&\gate{H}& \qw      & \ctrl{8} & \qw\ldots & \qw      &\qw       & \qw      & \qw      & \qw\ldots & \qw      &\qw        & \qw\ldots & \qw            & \qw\\
\wave&&&&&&&&&&&&&& \\
\lstick{$\ket{0}$}&\gate{H}& \qw      & \qw      & \qw\ldots & \ctrl{6} &\qw       & \qw      & \qw      & \qw\ldots & \qw      &\qw        & \qw\ldots & \qw            & \qw\\
\lstick{$\ket{0}$}&\gate{H}& \qw      & \qw      & \qw\ldots & \qw      &\qw       & \ctrl{5} & \qw      & \qw\ldots & \qw      &\qw        & \qw\ldots & \qw            & \qw\\
\lstick{$\ket{0}$}&\gate{H}& \qw      & \qw      & \qw\ldots & \qw      &\qw       & \qw      & \ctrl{4} & \qw\ldots & \qw      &\qw        & \qw\ldots & \qw            & \qw\\
\wave&&&&&&&&&&&&&& \\
\lstick{$\ket{0}$}&\gate{H}& \qw      & \qw      & \qw\ldots & \qw      &\qw       & \qw      & \qw      & \qw\ldots & \ctrl{2} &\qw        & \qw\ldots & \qw            & \qw\\
\wave&&&&&&&&&&&&&& \\
\lstick{$\ket{0}$}& \qw    &\gate{R_{1,1}}&\gate{R_{1,2}}& \qw\ldots &\gate{R_{1,w-1}}&\gate{R_1}&\gate{R_{2,1}}&\gate{R_{2,2}}& \qw\ldots &\gate{R_{2,w-1}}&\gate{R_2} & \qw\ldots & \gate{R_{|T|}} & \qw
    \end{quantikz}
  \end{adjustbox}
  \caption{Circuit for \(U_{a}(T)\). Gate \(R_{j}\) is a rotation \(R_{y}(4\pi (r_{j}^{-1}) / p)\). Gate \(R_{j,k}\) is a rotation \(R_{y}(2^{k-1} \cdot 4\pi (r_{j}^{-1}) / p)\)} \label{fig:full-aikps-circuit}
  \end{figure}
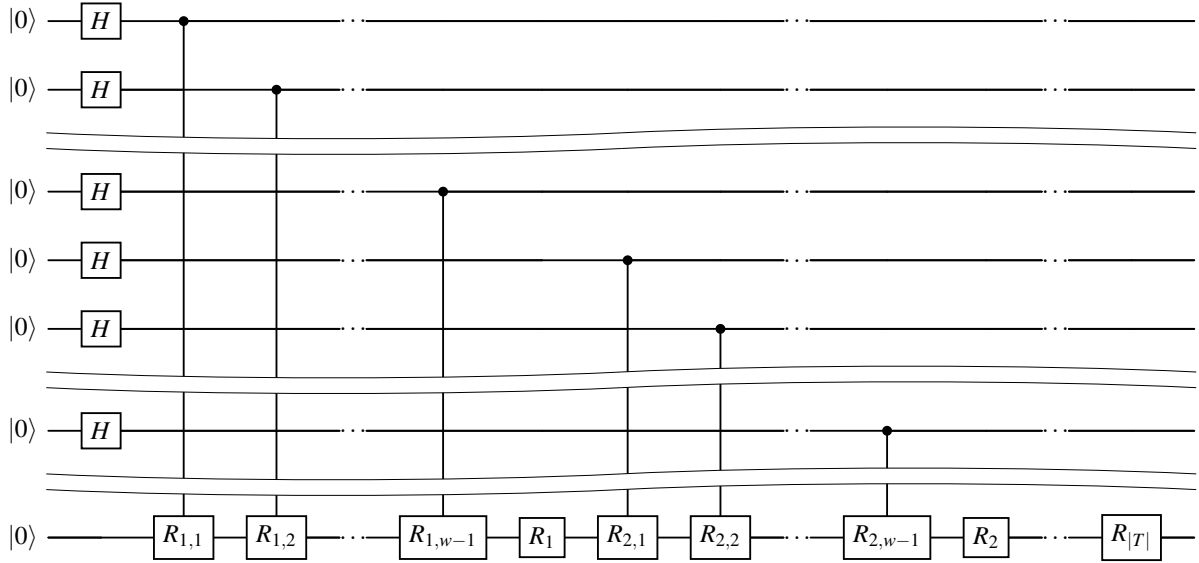

  Since \(|R| < (\log p)^{1+\varepsilon}\), we obtain that the depth of the circuit \(U_{a}(T)\) is less than
  \[
  (1 + 2\epsilon) \log^{1+\epsilon}p\; \log\log p . \qedhere
  \]

\end{proof}

\section{Numerical Experiments}\label{sec:numeric}

We conduct the following numerical experiments. We compute sets of coefficients $K$ for the automaton for the language $MOD_p$  with minimal computational error.

Finding an optimal set of coefficients  is an optimization problem with many parameters, and the running time of a brute force algorithm is large, especially with an increasing number $m$ of control qubits and large values of parameter $p$. Then, the original automaton has $2d$ states, where $d=2^m$. We observe circuits for several $m$ values and use a heuristic method for finding the optimal sets $K$ with respect to an error minimization. For this purpose, the coordinate descend method \cite{wright2015coordinate} is used.

 We find an optimal sets of coefficients for different values of $p$ and $m$ and compare computational errors of original and shallow fingerprinting algorithms for the automaton (see Figure~\ref{fig:s5}). Namely, we set $m=3,4,5$ and find sets using the coordinate descend method  for each case. Even heuristic computing, for $s>5$, takes exponentially more computational time and it is hard to implement on our devices.

\begin{figure}[hp]
\caption{Computational errors for $m=3,4,5$ of original and shallow automata}
\label{fig:s5}
\centering
\includegraphics[width=0.8\textwidth]{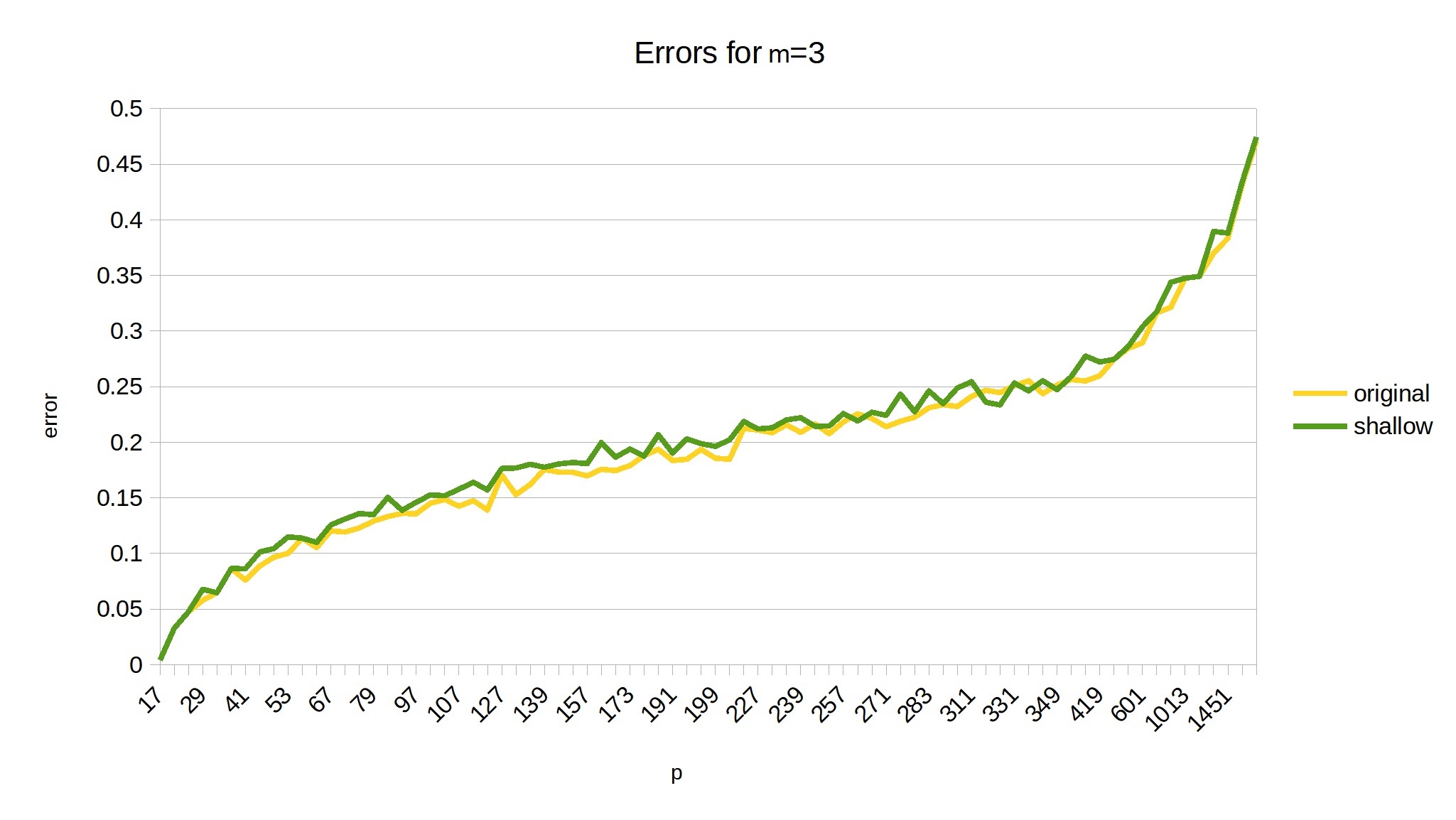}
\includegraphics[width=0.8\textwidth]{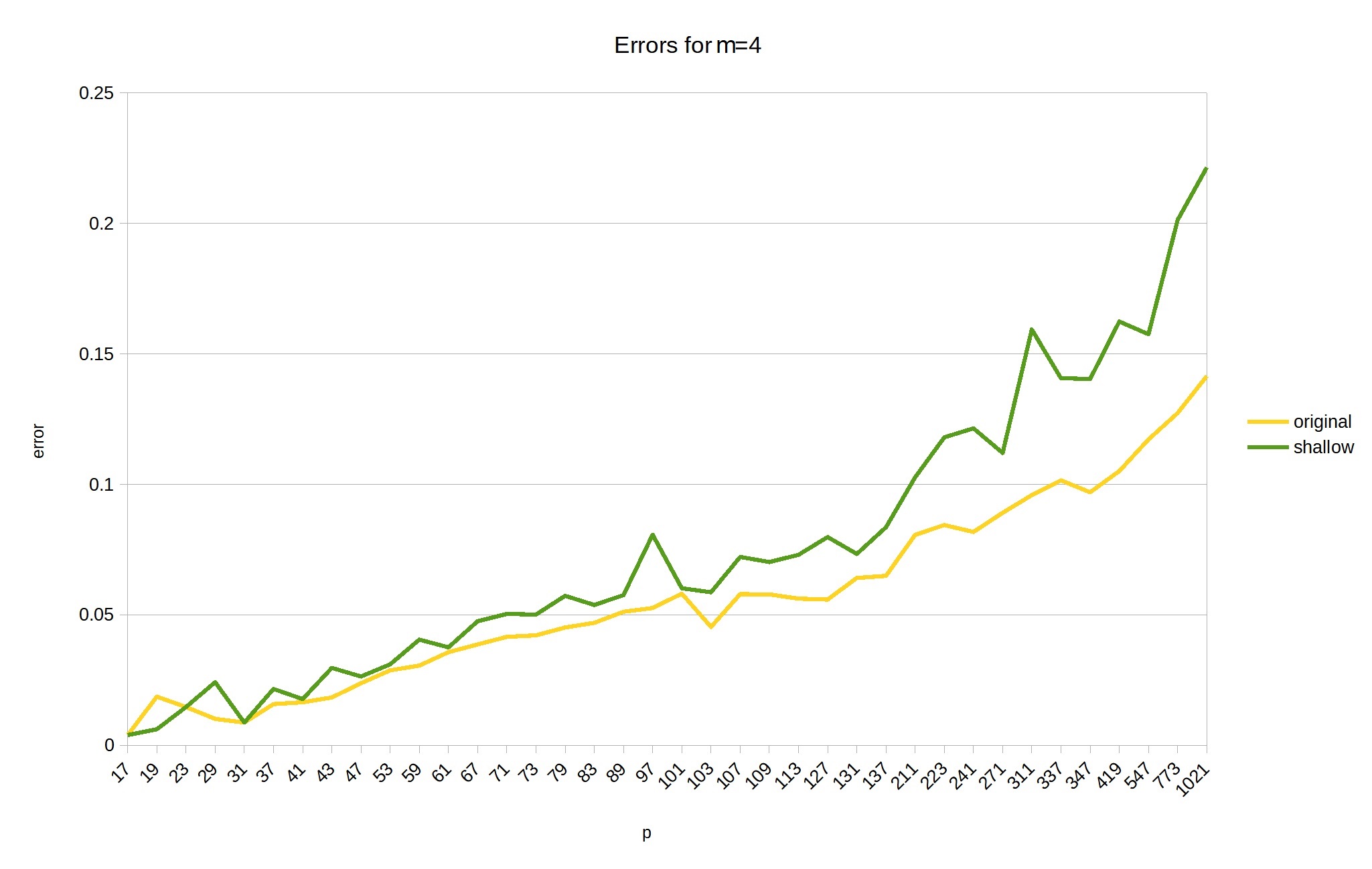}
\includegraphics[width=0.8\textwidth]{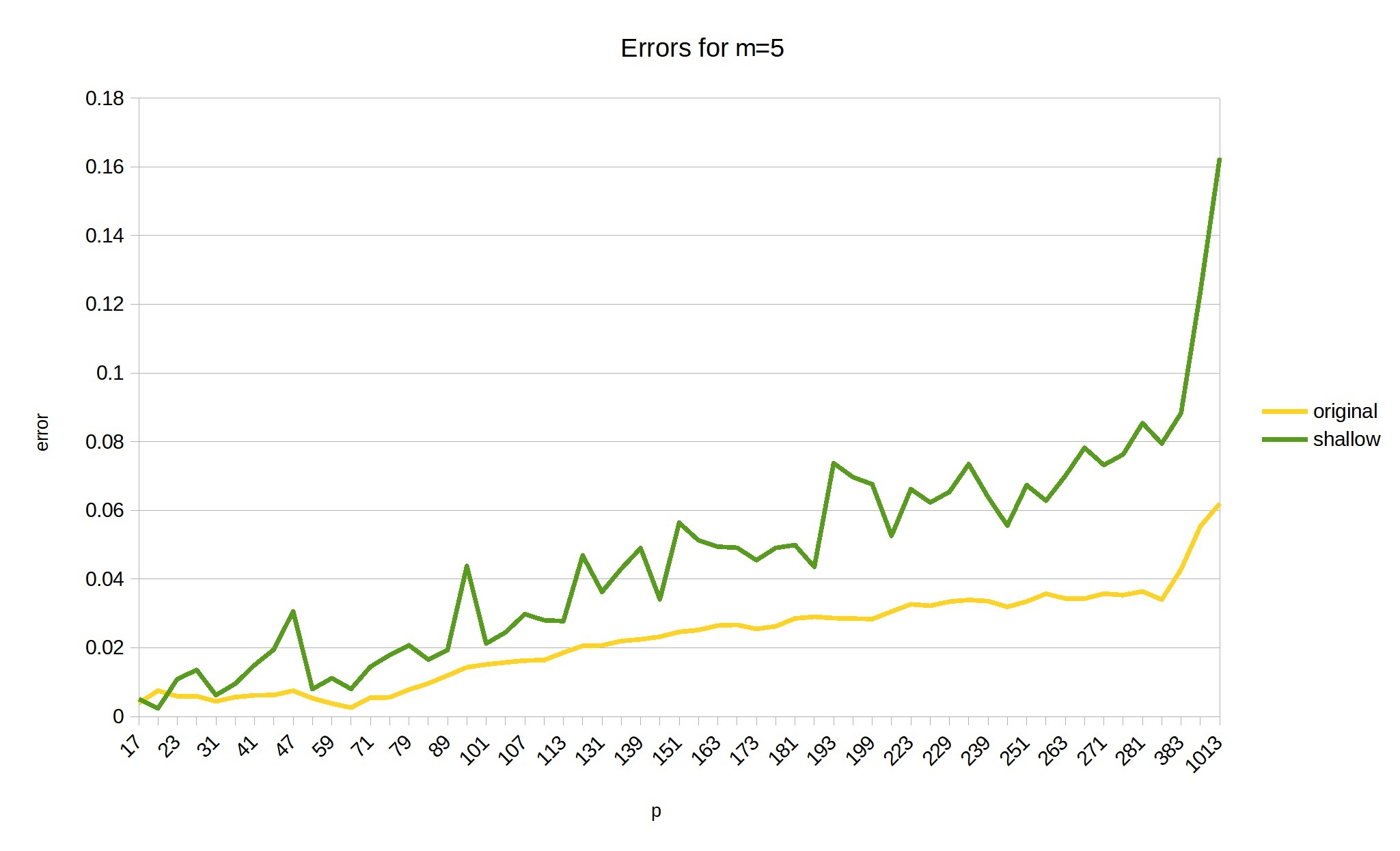}
\end{figure}

One can note that difference between errors becomes bigger with increasing $m$, especially for big values $p$. The program code and numerical data are presented in a git repo \cite{paramsComputing}.

The graphics in Figure~\ref{fig:s5-prop} show a proportion of the errors of the original automaton over the errors of the shallow automaton for $m=3,4,5$ and the prime numbers until 1013.

\begin{figure}[hp]
\caption{Proportions of the shallow automaton errors over the original automaton errors for $m=3,4,5$ and different values of $p$}
\label{fig:s5-prop}
\centering
\includegraphics[width=0.8\textwidth]{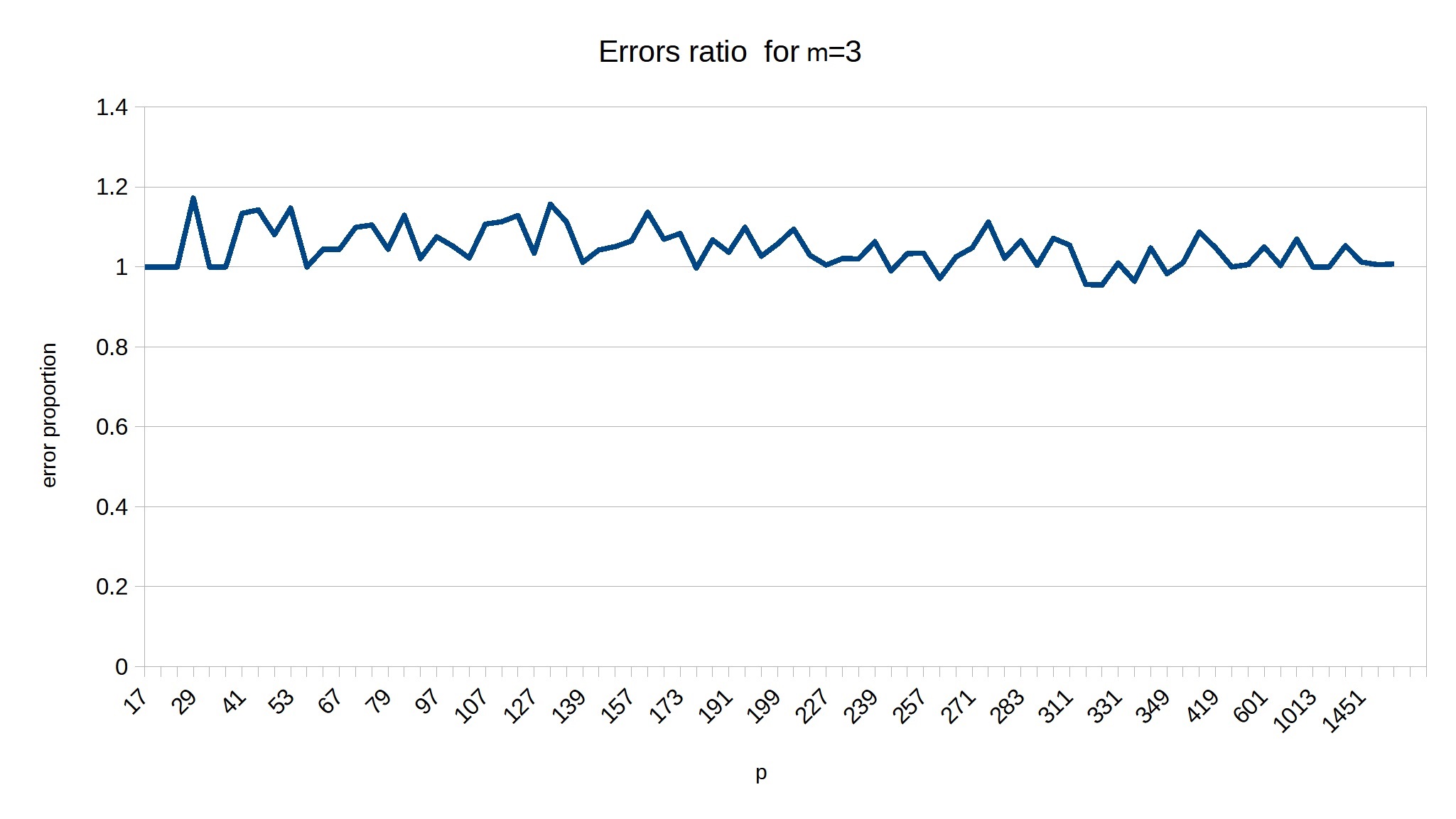}
\includegraphics[width=0.8\textwidth]{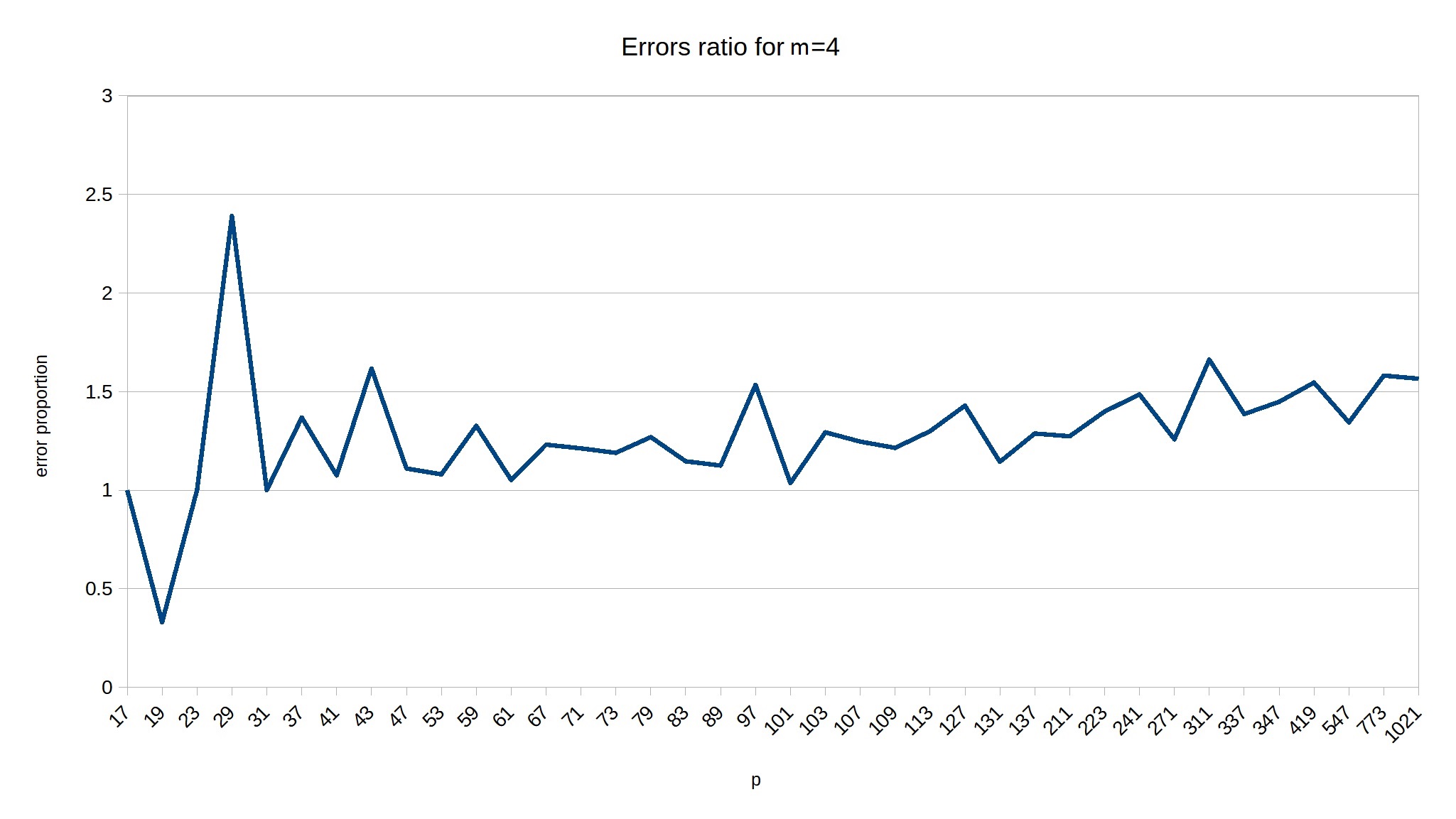}
\includegraphics[width=0.8\textwidth]{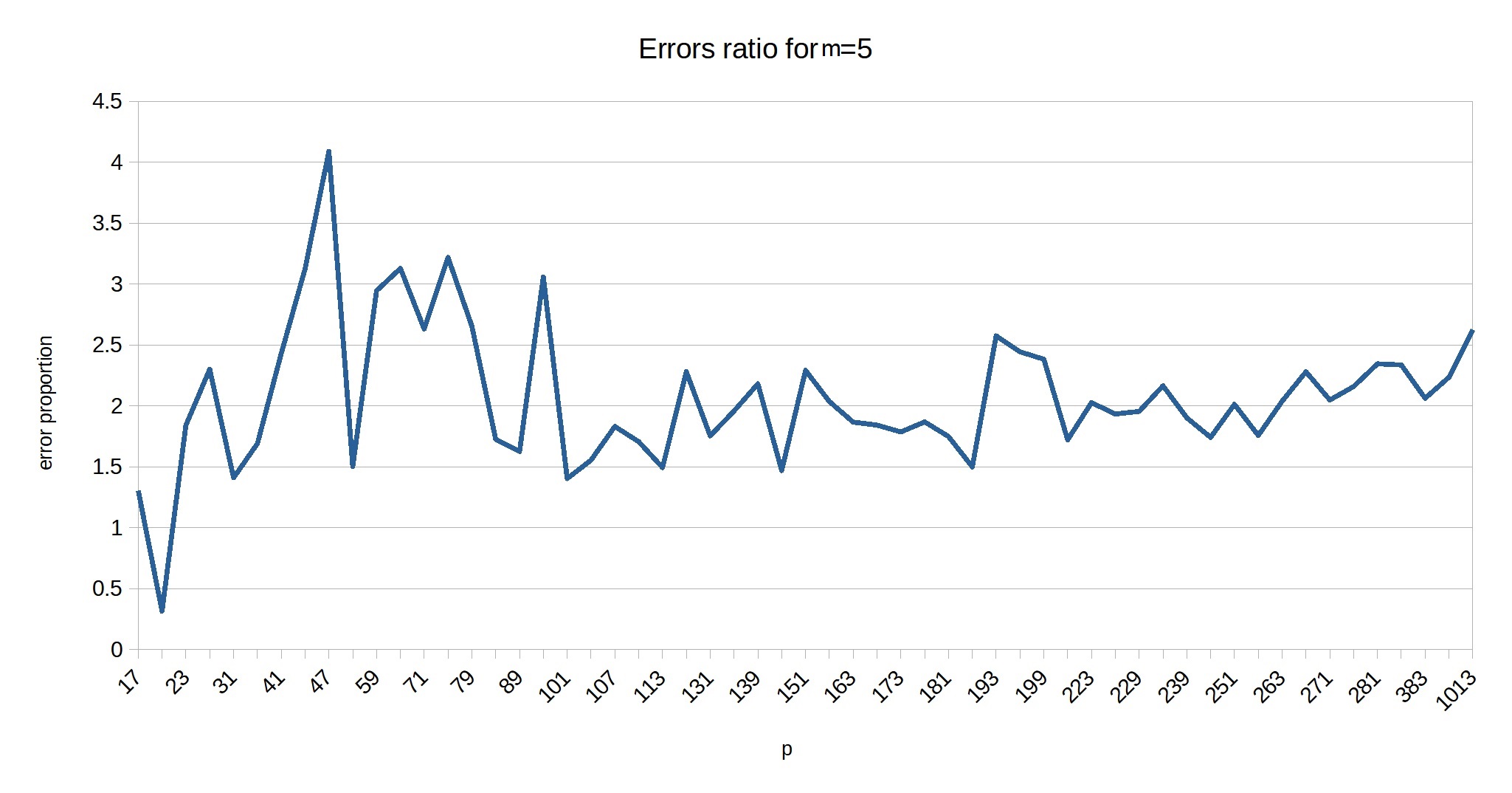}
\end{figure}
 
As we see, for a number of control qubits $m=3$, the difference between the original and shallow automata errors is approximately constant. The ratio of values fluctuates between 1 and 1.2. In the case $m=4$, this ratio is approximately 1.5 for almost all observed values $p$. The ratio of errors is nearly  between 1.5 and 3, for $m=5$.

According to the results of our experiments, the circuit depth $m+1$ is enough for valid computations, while the original circuit uses $O(2^m)$ gates. Since the shallow circuit is much simpler than the original one, its implementation on real quantum machines is much easier. For instance, in such machines as IBMQ Manila or Baidu quantum computer, a ``quantum computer'' is represented by a linearly related sequence of qubits. CX-gates can be applied only to the neighbor qubits. For such a linear structure of qubits, the shallow circuit can be implemented using $3m+3$ CX-gates. Whereas a nearest-neighbor decomposition \cite{mottonen2006decompositions} of the  original circuit requires $O(d \log d)=O(m2^m)$ CX-gates.

\section{Conclusions}\label{sec:conclusions}

We show that generalized arithmetic progressions generate some sets of coefficients \(k_{i}\) for the quantum fingerprinting technique with provable characteristics. These sets have large sizes, however, their depth is small and comparable to the depth of sets obtained by the probabilistic method. These sets can be used in the implementations of quantum finite automata suitable for running on the current quantum hardware.

We run numerical simulations. They show that the actual performance of the coefficients found by our method for quantum finite automata is not much worse than the performance of the other methods.

Optimizing quantum finite automata implementation for depth also poses an open question. The lower bound for the size of \(K\) in terms of \(p\) and \(\varepsilon\) is known~\cite{Ablayev2016a}. Therefore, for given \(p\) and \(\varepsilon\), quantum finite automata cannot have less than \(O(\log p / \varepsilon)\) states. But, to our knowledge, a lower bound for the circuit depth of the transition function implementation is not known. So, we pose an open question: is it possible to implement a transition function with depth less than \(O(\log p)\)? What is the lower bound for it?

\section{Acknowledgments}

Yakary{\i}lmaz was partially supported by the ERDF project Nr. 1.1.1.5/19/A/005 ``Quantum computers with constant memory'' and the project ``Quantum algorithms: from complexity theory to experiment'' funded under ERDF programme 1.1.1.5.

This paper has been supported by the Kazan Federal University Strategic Academic Leadership Program ("PRIORITY-2030").
Research in Section 4 were supported by the subsidy allocated to Kazan Federal University for the state assignment in the sphere of scientific activities, project No. 0671-2020-0065.

\bibliography{shallow-fingerprinting.bib}
\bibliographystyle{eptcs}

\end{document}